\newcommand{\term}[1]{\emph{#1}}
\newcommand{\F}{F}
\newcommand{\FX}{\F[X]}
\newcommand{\CZ}{S(f)}
\DeclareMathOperator{\rad}{rad}
\newtheorem{thm}{Theorem}
\newtheorem{lem}[thm]{Lemma}
\newtheorem{prop}[thm]{Proposition}
\newtheorem*{alg}{Algorithm}
\author{Przemysław Koprowski}
\address{Faculty of Mathematics\\
  University of Silesia\\
  ul. Bankowa 14\\ 
  PL-40-007 Katowice, Poland}
\email{pkoprowski@member.ams.org}
\title{Roots multiplicity without companion matrices}
\begin{document}
\begin{abstract}
We show a method for constructing a polynomial interpolating roots' multiplicities of another polynomial, that does not use companion matrices. This leads to a modification to Guersenzvaig--Szechtman square-free decomposition algorithm that is more efficient both in theory and in practice.
\end{abstract}
\keywords{square-free factorization, companion matrix}
\subjclass[2010]{12D05,13A05}
\maketitle

The problem of computing the square-free decomposition of a polynomial is well established in the realm of computational algebra. There are efficient algorithms for this task developed nearly half a century ago by: R.~Tobey \cite{Tobey67}, E.~Horowitz \cite{Horowitz69}, D.~Musser \cite{Musser71} and D.~Yun \cite{Yun76}. Recently, N.~Guersenzvaig and F.~Szechtman invented a completely new algorithm (see \cite{GS12}). They associate to a given polynomial~$f$ its \term{roots-multiplicity polynomial}~$M_f$. The authors presented a formula, for constructing~$M_f$, based on the companion matrix of the radical of~$f$. The aim of this note is to show that~$M_f$ can be obtained much more efficiently without using the companion matrix.

For readers convenience (and to some discomfort of the author), in this paper we use the same notation as N.~Guersenzvaig and F.~Szechtman did. Let~$\F$ be a fixed field of characteristic~$0$. Given a monic polynomial $f\in\FX$, the \term{square-free decomposition} of~$f$ is an expression
\begin{equation}\label{eq_sqfree_factorization}
f = P_1\cdot P_2^2\dotsm P_m^m,
\end{equation}
where each $P_k\in \FX$ for $k\in \{1, \dotsc, m\}$ is monic and square-free. Let $\CZ$ be the set of all the roots of~$f$ (in some algebraically closed field). For every root $\alpha\in \CZ$, denote by $m(\alpha)$ the multiplicity of~$\alpha$, so that
\[
f = \prod_{\alpha\in \CZ} (X-\alpha)^{m(\alpha)}.
\]
By Lagrange interpolation formula, there is a unique polynomial~$M_f$ of minimal degree such that $M_f(\alpha) = m(\alpha)$ for every root~$\alpha$ of~$f$. Once the polynomial~$M_f$ is found, the square free factors $P_1, \dotsc, P_m$ of~$f$ can be computed by the formula \cite[Eq.~(1.5)]{GS12}:
\[
P_k = \gcd(M_f - k, r)\qquad\text{for }k\in \{1, \dotsc, m\}.
\]
Thus, all we need is an efficient method for constructing~$M_f$. 

Guersenzvaig and Szechtman proposed the following procedure. The \term{radical} $r:=\rad f$ is a polynomial
\[
r = \prod_{\alpha\in \CZ}(X-\alpha) = \frac{f}{\gcd(f,f')}.
\]
The radical is square-free, hence its degree, denote it by $s:=\deg r$, equals the cardinality of $\CZ = S(r)$. Suppose that~$r$ has a form $r = r_0 + r_1X + \dotsb + r_{s-1}X^{s-1} + X^s$. Let
\[
C_r := 
\begin{pmatrix}
0 &        &   & -r_0\\
1 & \ddots &   & \vdots\\
  & \ddots & 0 & -r_{s-2}\\
  &        & 1 & -r_{s-1}
\end{pmatrix}
\]
be a companion matrix of~$r$ and set 
\[
P := \frac{f'}{\gcd(f,f')}.
\]
The fact that the radical is square-free implies that it is relatively prime to its derivative. Therefore, by B\'ezout identity, there are polynomials $g,h\in \FX$ such that
\begin{equation}\label{eq_xgcd}
r'\cdot g + r\cdot h = 1
\qquad\text{and}\qquad
\deg g < \deg r,\ \deg h < \deg r'.
\end{equation}
For any polynomial~$p$, $\deg p < s$, by $[p]$ denote a column vector of its coefficients, zero-appended to length~$s$, if needed.

\begin{thm}[{\cite[Theorem~2.1]{GS12}}]\label{thm_Mf_original}
With the above notation
\begin{equation}\label{eq_Mf_original}\tag{A}
[M_f] = P(C_r)\cdot [g].
\end{equation}
\end{thm}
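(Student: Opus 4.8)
The plan is to translate the matrix identity~\eqref{eq_Mf_original} into a statement about polynomial arithmetic modulo~$r$. By construction, $C_r$ is the matrix of the multiplication-by-$X$ endomorphism of $\FX/(r)$ expressed in the monomial basis $1, X, \dotsc, X^{s-1}$: its columns show that $C_r$ sends $X^i$ to $X^{i+1}$ for $i<s-1$, and $X^{s-1}$ to $-(r_0 + r_1X + \dotsb + r_{s-1}X^{s-1}) = X^s \bmod r$. Hence for any $q\in\FX$ and any $p\in\FX$ with $\deg p < s$ the vector $q(C_r)\cdot[p]$ is exactly $[\,qp \bmod r\,]$. Taking $q = P$ and $p = g$ (recall $\deg g < \deg r = s$), equation~\eqref{eq_Mf_original} turns into the assertion $M_f = Pg \bmod r$. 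Since $M_f$ interpolates data at the $s$ points of $S(r)$ it has degree less than~$s$, so it coincides with its own remainder modulo~$r$, and it suffices to prove
\[
M_f \equiv P\cdot g \pmod{r}.
\]

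Next I would reduce this to an evaluation statement. Both sides, reduced modulo~$r$, are polynomials of degree less than $s = \#S(r)$, so by the uniqueness part of Lagrange interpolation it is enough to check that they agree at every root of~$r$, i.e.\ that $(Pg)(\alpha) = m(\alpha)$ for each $\alpha\in\CZ$. This splits into two evaluations. First, writing $P = f'r/f$ (which follows from $P = f'/\gcd(f,f')$ together with $r = f/\gcd(f,f')$) and using $f'/f = \sum_{\alpha\in\CZ} m(\alpha)/(X-\alpha)$, one gets
\[
P = \sum_{\alpha\in\CZ} m(\alpha)\,\frac{r}{X-\alpha} = \sum_{\alpha\in\CZ} m(\alpha)\prod_{\substack{\beta\in\CZ\\ \beta\neq\alpha}}(X-\beta),
\]
so that at a fixed root~$\alpha$ every summand but the $\alpha$-th vanishes, leaving $P(\alpha) = m(\alpha)\prod_{\beta\neq\alpha}(\alpha-\beta) = m(\alpha)\,r'(\alpha)$. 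Second, evaluating the B\'ezout identity~\eqref{eq_xgcd} at~$\alpha$ and using $r(\alpha) = 0$ gives $r'(\alpha)g(\alpha) = 1$; since $r$ is square-free, $r'(\alpha)\neq 0$, whence $g(\alpha) = 1/r'(\alpha)$. Multiplying the two yields $(Pg)(\alpha) = m(\alpha)$, completing the argument.

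The only steps that require real care are the identification of $C_r$ with the multiplication operator (and the consequent reading of $q(C_r)[p]$ as the coefficient vector of $qp \bmod r$) and the computation of $P(\alpha)$; the degree bookkeeping and the evaluation of the B\'ezout relation are routine. I expect the main obstacle to be largely expository: presenting the $\FX/(r)$ interpretation cleanly enough that the passage from~\eqref{eq_Mf_original} to the congruence $M_f \equiv Pg \pmod r$ is transparent, and making sure the partial-fraction computation of~$P$ is justified over the algebraic closure where the roots live.
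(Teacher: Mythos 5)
Your argument is correct, but note that the paper itself does not prove Theorem~\ref{thm_Mf_original}: it quotes it from \cite{GS12}, whose proof works directly with the companion matrix, and the paper's own proofs concern the alternative formula~\eqref{eq_Mf_new}. What you have done is run the logic in reverse: you observe that $C_r$ is the matrix of multiplication by $X$ on $\FX/(r)$ in the monomial basis, so that $P(C_r)\cdot[g]=[\,Pg\bmod r\,]$ and \eqref{eq_Mf_original} becomes literally the statement $M_f=(Pg\bmod r)$; you then establish this congruence by exactly the argument the paper uses for Lemma~\ref{lem_p_xi} and the Proposition --- the identity $P(\alpha)=m(\alpha)\,r'(\alpha)$ obtained from $P=rf'/f$ (your partial-fraction computation is the same calculation in different clothing), the B\'ezout relation~\eqref{eq_xgcd} evaluated at a root giving $g(\alpha)=1/r'(\alpha)$, and uniqueness of the degree-$<s$ interpolant at the $s$ distinct roots, using $\deg M_f<s$. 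So your proof is sound and self-contained, and it buys something the paper leaves implicit: an explicit verification that formulas \eqref{eq_Mf_original} and \eqref{eq_Mf_new} produce the same polynomial, which is precisely what licenses replacing the companion-matrix formula by the modular product in the algorithm. The only caveat is that you prove \eqref{eq_Mf_original} through \eqref{eq_Mf_new} rather than independently, so your route does not reproduce whatever direct matrix-theoretic reasoning \cite{GS12} employed; but this creates no gap, since the multiplication-operator reading of $C_r$, the degree bookkeeping, and the evaluations are all fully justified.
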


We claim that~$M_f$ can be constructed more efficiently. To this end we need:

\begin{lem}\label{lem_p_xi}
For every root $\alpha\in \CZ$ one has
\[
P(\alpha) = m(\alpha)\cdot r'(\alpha).
\]
\end{lem}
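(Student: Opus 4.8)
The plan is to work directly with the factorization of~$f$ into linear factors over the algebraic closure and never touch the companion matrix. First I would write $f = \prod_{\alpha\in\CZ}(X-\alpha)^{m(\alpha)}$ and differentiate, obtaining
\[
f' = \sum_{\gamma\in\CZ} m(\gamma)\,(X-\gamma)^{m(\gamma)-1}\!\!\prod_{\beta\in\CZ,\ \beta\neq\gamma}\!\!(X-\beta)^{m(\beta)}.
\]
Since the characteristic is~$0$, no multiplicity collapses, so $\gcd(f,f') = \prod_{\alpha\in\CZ}(X-\alpha)^{m(\alpha)-1}$. Dividing the expression for~$f'$ by this common factor cancels one copy of every linear factor and yields the compact formula
\[
P = \sum_{\gamma\in\CZ} m(\gamma)\!\!\prod_{\beta\in\CZ,\ \beta\neq\gamma}\!\!(X-\beta).
\]
(Equivalently, one can read this off the identity $P\cdot f = f'\cdot r$ together with the partial-fraction expansion $f'/f = \sum_{\gamma}m(\gamma)/(X-\gamma)$.)

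Next I would run the same computation for the radical: from $r = \prod_{\alpha\in\CZ}(X-\alpha)$ one gets $r' = \sum_{\gamma\in\CZ}\prod_{\beta\neq\gamma}(X-\beta)$. Now fix a root $\alpha\in\CZ$ and evaluate both formulas at~$X=\alpha$. In the sum for~$P$, every term indexed by $\gamma\neq\alpha$ contains the factor $(X-\alpha)$ and hence vanishes at~$\alpha$; only the term $\gamma=\alpha$ survives, giving $P(\alpha) = m(\alpha)\prod_{\beta\neq\alpha}(\alpha-\beta)$. The identical cancellation applied to the sum for~$r'$ gives $r'(\alpha) = \prod_{\beta\neq\alpha}(\alpha-\beta)$. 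Comparing the two evaluations yields $P(\alpha) = m(\alpha)\cdot r'(\alpha)$, which is the claim.

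The only step that requires any care is the assertion that $\gcd(f,f') = \prod_{\alpha}(X-\alpha)^{m(\alpha)-1}$: this is exactly where the hypothesis $\operatorname{char}\F = 0$ enters, since in positive characteristic a factor with $p\mid m(\alpha)$ would behave differently. Everything else is a routine manipulation of the linear-factor expansion, so I do not expect a genuine obstacle here.
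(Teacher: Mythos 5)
Your argument is correct and is essentially the paper's own proof: both derive the explicit expansion $P = \sum_{\gamma\in\CZ} m(\gamma)\prod_{\beta\neq\gamma}(X-\beta)$ by differentiating the linear-factor form of $f$ and cancelling $\gcd(f,f') = \prod_\alpha (X-\alpha)^{m(\alpha)-1}$ (the paper phrases this as $P = r\cdot f'/f$, which is the same cancellation), then evaluate $P$ and $r'$ at a root $\alpha$ where only one summand survives. Your explicit remark on where characteristic $0$ enters is a welcome clarification but does not change the route.
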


\begin{proof}
Compute the derivative of $r$:
\[
r' = \sum_{\alpha\in \CZ} \prod_{\substack{\beta\in \CZ\\ \beta\neq \alpha}} (X-\beta).
\]
It follows that for every root $\alpha$ of~$f$ we have
\begin{equation}\label{eq_r'xi}
r'(\alpha) = \prod_{\substack{\beta\in \CZ\\ \beta\neq \alpha}}(\alpha - \beta).
\end{equation}
Next, compute the derivative of~$f$:
\[
f' = \sum_{\alpha\in\CZ} \Bigl( m(\alpha)\cdot (X-\alpha)^{m(\alpha)-1}\cdot 
\prod_{\substack{\beta\in\CZ\\ \beta\neq \alpha}} (X - \beta)^{m(\beta)}\Bigr).
\]
It follows that the polynomial~$P$ can be expressed in a form
\begin{align*}
P 
&= \frac{f'}{\gcd(f,f')}
= \frac{r\cdot f'}{f}\\
&= \frac{\sum\limits_{\alpha\in \CZ} \Bigl( m(\alpha)\cdot (X-\alpha)^{m(\alpha)-1}\cdot \prod\limits_{\beta\neq \alpha} (X-\beta)^{m(\beta)}\Bigr)}%
{\prod\limits_{\alpha\in \CZ} (X-\alpha)^{m(\alpha)-1}}\\
&= \sum_{\alpha\in \CZ} \Bigl( m(\alpha)\cdot \prod_{\substack{\beta\in \CZ\\ \beta\neq \alpha}}(X-\beta)\Bigr).
\end{align*}
Consequently, evaluating~$P$ at a root~$\alpha$, we have
\[
P(\alpha) = m(\alpha)\cdot \prod_{\substack{\beta\in \CZ\\ \beta\neq \alpha}} (\alpha - \beta)
\]
and the thesis follows from Eq.~\eqref{eq_r'xi}.
\end{proof}

\begin{prop}
The polynomial~$M_f$ is the remainder of the product $P\cdot g$ modulo~$r$:
\begin{equation}\label{eq_Mf_new}\tag{B}
M_f = \bigl( P\cdot g\bmod r\bigr).
\end{equation}
\end{prop}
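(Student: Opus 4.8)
The plan is to verify that the polynomial defined by the right-hand side of~\eqref{eq_Mf_new} satisfies the two properties that characterize $M_f$: it has degree strictly smaller than $s=\deg r$, and it takes the value $m(\alpha)$ at every root $\alpha$ of~$f$. Once this is done, the uniqueness built into the Lagrange interpolation formula forces it to coincide with $M_f$.

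To carry this out, I would first set $N:=\bigl(P\cdot g\bmod r\bigr)$. By the definition of the remainder, $\deg N<\deg r=s$ and there is a polynomial $q\in\FX$ with $N=P\cdot g-q\cdot r$. For an arbitrary root $\alpha\in\CZ$ we have $r(\alpha)=0$, so the term $q(\alpha)r(\alpha)$ vanishes and hence $N(\alpha)=P(\alpha)\cdot g(\alpha)$.

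The key point is then to evaluate $g$ at $\alpha$. Substituting $X=\alpha$ into the B\'ezout relation~\eqref{eq_xgcd} and using $r(\alpha)=0$ again, we get $r'(\alpha)\cdot g(\alpha)=1$; since $r$ is square-free it is coprime with $r'$, so $r'(\alpha)\neq 0$ and therefore $g(\alpha)=r'(\alpha)^{-1}$. Combining this with Lemma~\ref{lem_p_xi} yields
\[
N(\alpha)=P(\alpha)\cdot g(\alpha)=\bigl(m(\alpha)\cdot r'(\alpha)\bigr)\cdot r'(\alpha)^{-1}=m(\alpha).
\]

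Finally, because $r$ is square-free, its $s$ roots are pairwise distinct and form exactly the set~$\CZ$. Thus $N$ is a polynomial of degree less than $s$ that agrees with the values $m(\alpha)$ at all $s$ points of~$\CZ$, and so is $M_f$; by the uniqueness of the interpolating polynomial of degree below the number of nodes, $N=M_f$. I do not expect a genuine obstacle here — the statement drops out of Lemma~\ref{lem_p_xi} almost immediately — and the only thing to keep track of is that the degree bound $\deg N<s$ matches the count $s=|\CZ|$ of interpolation nodes, which is precisely the condition under which $M_f$ is uniquely pinned down.
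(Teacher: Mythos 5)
Your proposal is correct and follows essentially the same route as the paper: reduce to checking $\deg<s$ and the values at the $s$ distinct roots, then evaluate $P(\alpha)g(\alpha)$ via Lemma~\ref{lem_p_xi} and the B\'ezout identity~\eqref{eq_xgcd}. The only cosmetic difference is that you phrase the key step as $g(\alpha)=r'(\alpha)^{-1}$, whereas the paper writes $m(\alpha)\,r'(\alpha)\,g(\alpha)=m(\alpha)(1-r\cdot h)(\alpha)=m(\alpha)$; these are the same computation.
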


\begin{proof}
Denote the right-hand-side of formula~\eqref{eq_Mf_new} by~$\mu$. We must show that $\mu = M_f$. Since $\deg\mu < \deg r = s$, it suffices to show that the two polynomials agree at~$s$ distinct points, namely the roots of~$f$ (hence also~$r$). Take any root $\alpha\in \CZ$, then by Lemma~\ref{lem_p_xi} and Eq.~\eqref{eq_xgcd} we may write
\[
\mu(\alpha) 
= P(\alpha)\cdot g(\alpha)
= m(\alpha)\cdot r'(\alpha)\cdot g(\alpha)
= m(\alpha)\cdot(1 - r\cdot h)(\alpha)
= m(\alpha).
\]
This shows that these two polynomials are identical.
\end{proof}

Let us now recall Guersenzvaig--Szechtman algorithm.

\begin{alg}
Given a monic polynomial~$f$, over a field of characteristic~$0$, this algorithm computes its square-free decomposition~\eqref{eq_sqfree_factorization}.
\begin{enumerate}
\renewcommand{\theenumi}{\alph{enumi}}
\item\label{step_a} Construct polynomials
\[
P := \frac{f'}{\gcd(f,f')},\qquad r := \frac{f}{\gcd(f,f')}
\]
and find polynomials $g,h$ satisfying Eq.~\eqref{eq_xgcd};
\item\label{step_b} build the polynomial~$M_f$ using either formula~\eqref{eq_Mf_original} or formula~\eqref{eq_Mf_new};
\item\label{step_c} for $k = 1, 2,\dotsc$ and as long as $\sum_{j\leq k} j\cdot \deg P_j < \deg f$, compute $k$-th square-free factor of~$f$:
\[
P_k := \gcd\bigl( M_f - k, r\bigr).
\]
\end{enumerate}
\end{alg}

\subsubsection*{Complexity analysis} As it is a common practice, let $M(n)$ denote the time complexity of computing a product of two polynomials of degree~$n$. For long polynomial multiplication $M(n) = O(n^2)$, while for fft-based multiplication $M(n) = O(n\cdot \lg n)$. (Here $\lg = \log_2$ stands for logarithm of base~$2$.) Recall that $s = \deg r$ and let $n := \deg f$. It is clear that $s = O(n)$. If one uses fast gcd, then step~\eqref{step_a} of the algorithm has time complexity $O\bigl(M(n)\lg n\bigr)$. The same applies to every iteration in step~\eqref{step_c}. The number of iterations equals $m = O(n^{0.5})$, the number of square-free factors of~$f$. Hence the complexity of step~\eqref{step_c} is $O\bigl(n^{0.5}\cdot M(n)\cdot \lg n\bigr) = O\bigl( n^{1.5}\cdot (\lg n)^2\bigr)$, when using fast polynomial multiplication and $O(n^{2.5}\lg n)$ for long multiplication. It remains to analyze the time complexity of step~\eqref{step_b}, the one for which we propose a new formula.

Observe that in order to construct~$M_f$ with formula~\eqref{eq_Mf_original}, one needs to evaluate the polynomial~$P$ on the matrix~$C_r$. Using Horner scheme, this requires~$s$ products of $s\times s$ square matrices. Thus, with standard matrix arithmetic, Eq.~\eqref{eq_Mf_original} needs $s + s^4$ scalar multiplications. Using fast matrix multiplication (see e.g. \cite{LeGall14, Zhdanovich12}), the asymptotic complexity (the actual number of scalar products is hard to count) of that formula is about $O(s^{3.37}) = O(n^{3.37})$. Therefore, if formula~\eqref{eq_Mf_original} is in use, step~\eqref{step_b} asymptotically dominates the running time of the whole algorithm.

On the contrary, formula~\eqref{eq_Mf_new} needs only $s+M(s)$ scalar products and so it has the asymptotic complexity of $O(s\lg s) = O(n\lg n)$, if one uses fast polynomial multiplication (respectively $O(s^2) = O(n^2)$ for long multiplication). Consequently, with formula~\eqref{eq_Mf_new}, step~\eqref{step_b} no longer dominates the time complexity of the algorithm.

In order to compare both formulas in real-life situations, they were implemented in computer algebra system Magma \cite{magma} and evaluated on random polynomials of different degrees. Table~\ref{tbl} presents the running times. The code can be downloaded from author's website  \url{http://z2.math.us.edu.pl/perry/papersen.html}

\subsubsection*{Conclusion} The modification of Guersenzvaig--Szechtman algorithm presented in this note leads to a procedure that is faster than the original one both theoretically and empirically.


\begin{table}
\caption{\label{tbl}Running times (in seconds) of Magma implementations of Eqs.~\eqref{eq_Mf_original} and~\eqref{eq_Mf_new}. For each degree, both formulas were evaluated on the same set of 10 randomly generated polynomials.}
\begin{center}
\begin{tabular}{r|r@{.}l|r@{.}l}
degree & \multicolumn{2}{c|}{Eq.~\eqref{eq_Mf_original}} & \multicolumn{2}{c}{Eq.~\eqref{eq_Mf_new}}\\\hline
10 & 0&010 & 0&000\\
20 & 0&020 & 0&010\\
50 & 0&170 & 0&140\\
100 & 1&140 & 0&600\\
200 & 10&690 & 3&170\\
500 & 330&900 & 32&230
\end{tabular}
\end{center}
\end{table}
\end{document}